\numberwithin{equation}{section}
\newtheorem{theorem}{Theorem}
\newtheorem{example}[theorem]{Example}
\newtheorem{lemma}[theorem]{Lemma}
\newtheorem{remark}[theorem]{Remark}
\newcommand{\ex}{\mathbf {E}}
\newcommand{\pr}{\mathbf {P}}
\newcommand{\R}{\mathbb R}
\newcommand{\Rd}{\mathbb R^d}
\newcommand{\spctim}{\mathbb R^{d+1}}
\newcommand{\del}{\partial }
\newcommand{\1}{\mathbf 1}
\newcommand{\eps}{\varepsilon}
\newcommand{\bs}{\boldsymbol}
\begin{document}

\footnotesize {\flushleft \mbox{\bf \textit{Math. Model. Nat.
Phenom.}}}
 \\
\mbox{\textit{{\bf Vol. 3, No. 1, 2008, pp. 1-3}}}

\thispagestyle{plain}

\vspace*{2cm} 
\normalsize \centerline{\Large \bf A Semi-Markov Algorithm}
\vspace*{2mm}
\centerline{\Large \bf for Continuous Time Random Walk Limit Distributions}

\vspace*{1cm}

\centerline{\bf G. Gill$^a$, P. Straka$^a$
\footnote{Corresponding
author. E-mail: p.straka@unsw.edu.au}}

\vspace*{0.5cm}

\centerline{$^a$ School of Mathematics \& Statistics, UNSW Australia}




\vspace*{1cm}

\noindent {\bf Abstract.}
The Semi-Markov property of Continuous Time Random Walks (CTRWs) and
their limit processes is utilized, and the probability distributions
of the bivariate Markov process $(X(t),V(t))$ are calculated: 
$X(t)$ is a CTRW limit and $V(t)$ a process tracking the age, 
i.e. the time since the last jump.
For a given CTRW limit process $X(t)$, a sequence of discrete CTRWs 
in discrete time is given which converges to $X(t)$ (weakly
in the Skorokhod topology). Master equations for the discrete
CTRWs are implemented numerically, thus approximating
the distribution of $X(t)$. 
A consequence of the derived algorithm is that any distribution of 
initial age can be assumed as an initial condition for the CTRW limit dynamics. 
Four examples with different temporal scaling are discussed: 
subdiffusion, tempered subdiffusion, the fractal mobile/immobile model 
and the tempered fractal mobile/immobile model.

\vspace*{0.5cm}

\noindent {\bf Key words:} anomalous diffusion,
fractional kinetics,
Semi-Markov,
fractional derivative

\noindent {\bf AMS subject classification:} 60F17, 60G22, 90C40


\vspace*{1cm}

\setcounter{equation}{0}
\section{Introduction}

Subdiffusion is now a well-studied theoretical phenomenon in 
statistical physics, motivated by experimental findings in many
different fields, most prominently biophysics 
\cite{Metzler2000,regner2013anomalous,Berkowitz06,Scalas2006Lecture,
Santamaria2006a,Banks2005}.
The Continuous Time Random Walk (CTRW) has been a particularly
successful model for subdiffusion \cite{Metzler2000,HLS2010b},
due to both its tractability and flexibility:
i) Probability densities can be computed via the
fractional Fokker-Planck equation \cite{HLS10PRL,Langlands2005a};
ii) Reaction-subdiffusion equations can be derived from CTRW dynamics
\cite{Mendez2010,Angstmann2013}
iii) Nonlinear dynamics may be incorporated into CTRWs
\cite{StrakaFedotov14,Fedotov2015};
iv) CTRWs, via subordination, can model a variety of scaling 
behaviours and cross-overs between scales
(see \cite{Stanislavsky2008,SchumerMIM} and Section 6 in this article);
and
v) Via a coupling between jumps and waiting times, an even greater
variety of CTRW processes can be modeled 
\cite{StrakaHenry,Jurlewicz}, with applications to L\'evy Walks
\cite{Magdziarz2015} and relaxation phenomena
\cite{Weron10}.

CTRWs and their scaling limits, however, do not possess the Markov 
property, but are in fact Semi-Markov processes
\cite{SemiMarkovCTRW}.
This means that the calculation of the joint distribution at multiple
times (termed ``finite-dimensional distributions'' in stochastic process
theory) is problematic, though significant progress has been made 
\cite{Busani2015,MMMPS12,Baule2007}.

\begin{figure}
\centering
\includegraphics[width=0.8\textwidth]{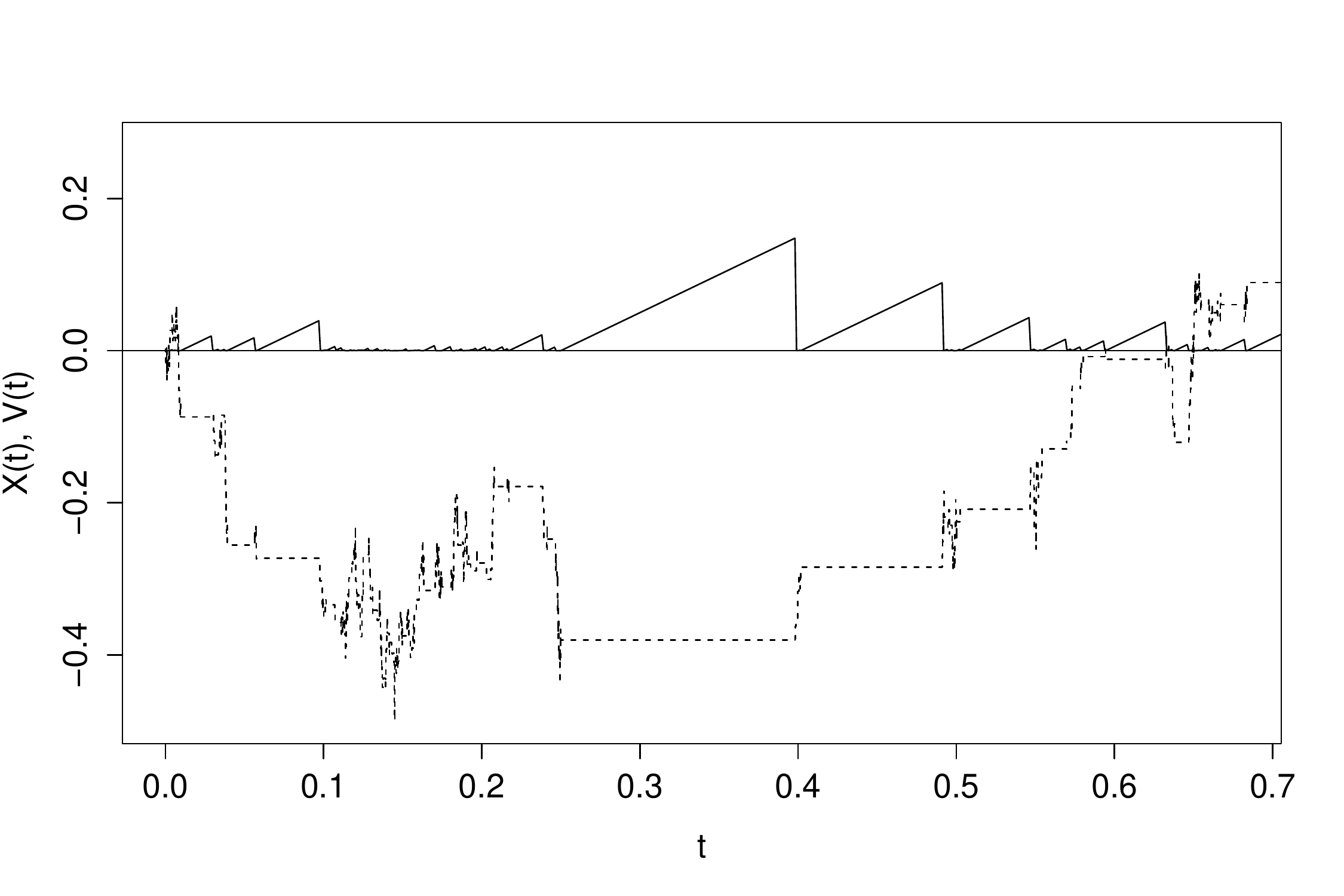}
\caption{\label{fig:V}
Sample paths of the age process $V(t)$ (full line), which renders the CTRW limit
$X(t)$ (dashed line) Markovian.}
\end{figure}

In this article, we utilise the Semi-Markov property of \emph{scaling
limits} of CTRWs, and thus derive a computational algorithm for
the calculation of the probability distributions of CTRW limit
processes. 
Our approach uses the purely Markovian dynamics of $(X(t),V(t))$,
where $X(t)$ is a CTRW limit process and $V(t)$ the process
which tracks the time which has passed since the last jump. 
This process has saw teeth sample paths
(Figure~\ref{fig:V}, also see \cite{MMMPS12})
and is well-known from renewal theory as the ``age'' or 
``backward recurrence time.''
Here, we shall refer to $V(t)$ as the ``residence time.''

The main conceptual difficulty with the Semi-Markov property of 
CTRW limits is that conditional on $V(t) = 0$ we almost surely have 
$V(t') = 0$ for infinitely many $t'$ in
$(t,t+\varepsilon)$ for any $\varepsilon > 0$.
A careful analysis of the limiting sample paths is necessary 
to properly define $V(t)$ and to
establish the Markov property \cite{SemiMarkovCTRW}.
It is seemingly necessary to utilize jump processes with infinite
L\'evy measures to define the joint process $(X(t),V(t))$. 
The procedure that we use to approximate these is similar to 
the approximation of L\'evy processes by compound Poisson processes
(see e.g.\ Section 3.4 in \cite{MeerschaertSikorskii}). 
We walk the reader through the main technical steps in Sections 2-4.

Our algorithm (Section 5) computes the probability densities 
of $(X(t),V(t))$, and by the Markov property and the 
Chapman-Kolmogorov equations, joint distribution of this process at
multiple times $t_1, \ldots, t_k$ can be calculated. 
By taking marginal distributions,
one thus arrives at the joint distribution of $X(t)$ at multiple
times.

Another important application of our algorithm in the fact
that any age distribution may be taken as an initial condition.
This is an important generalization to the Fractional-Fokker-Planck
equation, which implicitly assumes that the initial age of every 
particle equals $0$. 
For instance, taking a snapshot of a cell in which protein molecules are
undergoing (tempered) subdiffusion, there is no reason to believe that 
the time of the snapshot marks the beginning of a waiting time
for each protein molecule. 
We deem it more likely that an ``equilibrium'' initial condition
for the molecule residence times is more appropriate (Section 6).

\section{CTRWs as random walks in space-time}

In this section we set up the theory for scaling limits of CTRWs. 
Space and time need to be jointly rescaled in order to arrive at a meaningful limit,
in much the same fashion as Brownian motion is the scaling limit
of random walks. 
Which scaling functions are appropriate will depend on the tail behaviour of the
waiting time and jump distributions. 
For simplicity, we will later assume nearest neighbour jumps,
and focus on what scaling limits are appropriate for the waiting times,
but the derivation in this Section is held as general as possible, 
which is of independent interest, and causes no extra difficulty.

The key property of CTRWs, which makes much of their analysis a great deal easier compared
to e.g.\ fractional Brownian motion, is the renewal property: Every time a walker jumps, 
its entire future trajectory becomes independent of its past. The next jump time and the 
next position thus only depend on the current time and position; in
other words, position and jump time constitute a Markov chain in space-time $\spctim$. The probability distribution of this Markov
chain is then uniquely determined by i) its starting point in space-time and 
ii) a jump kernel $K(dz,dw|x,s)$ expressing the probability
that conditional on a CTRW arriving at $x$ at time $s$, 
its next jump happens at time $s+w$ and is of size $z$.
It satisfies that 

\begin{enumerate}
\item
$B \times C \mapsto K(B \times C | x,s)$ is a probability measure on $\Rd \times (0,\infty)$ for every $(x,s) \in \spctim$
\item
$(x,s) \mapsto K(B \times C | x,s)$ is measurable for any (Borel) $B \times C \subset \spctim$.
\end{enumerate}

For example, to define a subdiffusive random walk with 
subdiffusive coefficient $0 < \beta < 1$ in a space- 
and time-dependent external force field $b(x,t)$, define the
transition probability kernel via
\begin{align*}
K(B \times (w, \infty) | x,s) = (1 \wedge w^{-\beta}) \mathcal N(B| b(x,s+w), \sigma^2), \quad B \subset \Rd, \quad w > 0,
\end{align*}
where $\wedge$ stands for ``minimum'' and 
$\mathcal N(dz | \mu, \sigma^2)$ denotes a Gau\ss{}ian 
probability distribution on $\mathbb R$ with mean $\mu$ and 
variance $\sigma^2$.
Note that the jump, occurring at time $s+w$, is biased by the
external force $b(x,t)$, which is accordingly evaluated at the time 
$s+w$.

The above Markov chain defines a sequence of random points in space-time
$(x,s) = (A_0, D_0)$, $(A_1, D_1)$, $(A_2, D_2)$, $\ldots$
from which the CTRW trajectory $(X(t))_{t \ge s}$ can be uniquely 
reconstructed: If $D_{k} \le t < D_{k+1}$, then $X(t) = A_k$.
To avoid confusion, we stress that there are two different
notions of ``time'': CTRW jumps occur in \emph{physical time} 
(which we denote by $t$), 
at epochs given by $D_k, n \in \mathbb N$.  
The jumps of the space-time Markov chain 
$(A_k, D_k)_{k \in \mathbb N_0}$ occur at the integer times 
$k \in \mathbb N$, which
corresponds to the count of CTRW jumps.
In the scaling limit below, this count becomes continuous,
and we dub it the \emph{auxiliary time} (usually writing $r$).

We identify a CTRW with its underlying space-time Markov chain.
We then give conditions for a sequence of such Markov chains to
converge to a continuum ``jump-diffusion'' process, whose state 
space is $\spctim$ (Theorem~\ref{thm:spctim-conv}). 
This convergence holds on the stochastic process level, 
in the sense of weak convergence of probability measures on the
Skorokhod space of trajectories. 
Trajectories of this jump-diffusion then again map to
trajectories of CTRW limit processes (Theorem~\ref{th:SM-conv}).

\begin{theorem}
\label{thm:spctim-conv}
For every $n \in \mathbb N$, 
let $(A^n, D^n) = \{(A^n_k, D^n_k)\}_{k \in \mathbb N_0}$
be a Markov chain on the state space $\spctim$ 
with starting point $(x_0,s_0)$ and a transition kernel 
$K^n$ as described above. 
Assume that 
\begin{enumerate}
\item
\begin{align}
\lim_{\varepsilon \downarrow 0}\lim_{n \to \infty}
n \int\limits_{\|z\| < \varepsilon} \int\limits_{ 0 \le w < \eps} z_i K^n(dz,dw|x,s) &= b_i(x,s), ~~~ 1 \le i \le d
\label{eq:K1}\\
\lim_{\eps \downarrow 0} \lim_{n \to \infty}
n \int\limits_{\|z\| < \eps} \int\limits_{0 \le w < \eps} w K^n(dz,dw|x,s) &= c(x,s)
\label{eq:K2}\\
\lim_{\eps \downarrow 0} \lim_{n \to \infty}
n \int\limits_{\|z\| < \eps} \int\limits_{0 \le w < \eps}  z_i z_j K^n(dz,dw|x,s) &= a_{ij}(x,s), ~~~ 1 \le i,j \le d
\label{eq:K3}\\
\begin{split}
\lim_{n \to \infty}
n \int\limits_{z \in \Rd} \int\limits_{0 \le w} g(z,w) K^n(dz,dw|x,s)
&=  \int\limits_{z \in \Rd} \int\limits_{0 \le w} g(z,w) \Pi(dz,dw|x,s)
\label{eq:K4}
\end{split}
\end{align}
where $a_{ij}$, $b_i$ and $c$ are real-valued bounded continuous functions,
$z = (z_1, \ldots, z_d)$, 
$\Pi(\cdot \times \cdot | x,s)$ is a L\'evy measure on $\Rd \times [0,\infty)$ (see remark
below) for every $(x,s) \in \spctim$
and $g$ is varying over all real-valued bounded continuous functions which vanish in a neighborhood of the origin $(0,0)$.
\item
The operator $\mathcal A$ given by
\begin{multline}
\label{eq:DAgen}
\mathcal A f(x,s)
=  b_i(x,s) \del_{x_i} f(x,s)
+ c(x,s) \del_s f(x,s)
+\frac{1}{2}  a_{ij}(x,s) \del_{x_i} \del_{x_j} f(x,s)\\
+\int\limits_{z \in \Rd} \int\limits_{w \ge 0} \left[f(x+z,s+w)
-f(x,s)-  z^i \1(\|z\| < 1) \del_{x_i} f(x,s)\right]  \Pi(dz,dw | x,s)
\end{multline}
generates a Feller semigroup of transition 
probabilities\footnote{$T_r(dy,dt|x,s)$ denotes the probability that
$A(r) \in dy$ and $D(r) \in dt$ given $A(0)=x$, $D(0)=s$.
It thus operates on continuous functions vanishing at $\infty$, via
$T_r f(x,s) = \iint f(y,t) T_r(dy,dt|x,s)$.
The semigroup property reads $T_r T_{r'} f = T_{r + r'}f$,
and is equivalent to the Chapman-Kolmogorov equations for Markov
processes. 
The Feller property is a technical condition, 
see e.g.\ \cite{Applebaum}.} 
$(T_r)_{r \ge 0}$ on $C_0(\spctim)$ (the space of
real-valued continuous functions which vanish at $\infty$). 
\item
$\{L(r)\}_{r \ge 0}$ is an independent Poisson process with unit
intensity.
\end{enumerate}
Then the sequence of processes 
$\left\lbrace\left(A^n_{L(nr)}, D^n_{L(nr)}\right)\right\rbrace_{r \ge 0}$ 
converges weakly (with respect to the Skorokhod $J_1$ topology) to the 
$\spctim$-valued diffusion process with jumps 
$\{(A(r),D(r))\}_{r \ge 0}$ 
starting at $(x_0,s_0)$ and governed by the Feller semigroup 
$(T_r)_{r \ge 0}$.
\end{theorem}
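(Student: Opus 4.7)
The plan is to recognize the time-changed process $Y^n(r) := (A^n_{L(nr)}, D^n_{L(nr)})$ as a continuous-time pure-jump Markov process on $\spctim$, compute its infinitesimal generator $\mathcal{A}^n$, verify that $\mathcal{A}^n f \to \mathcal{A} f$ uniformly on a suitable core of $\mathcal{A}$, and then invoke a standard semigroup convergence theorem for Feller processes (Ethier--Kurtz, Theorem 4.2.11) to deduce weak convergence in the Skorokhod $J_1$ topology. This route is available precisely because hypothesis (2) supplies the Feller generator $\mathcal{A}$ on the nose, so no independent construction of the limit process is needed.

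First I would observe that because $L(\cdot)$ has unit intensity, $L(n\cdot)$ jumps at rate $n$, so $Y^n$ waits an $\operatorname{Exp}(n)$ amount of auxiliary time and then jumps according to $K^n$, yielding the generator
\begin{equation*}
\mathcal{A}^n f(x,s) = n \int_{\Rd \times [0,\infty)} [f(x+z, s+w) - f(x,s)] \, K^n(dz, dw | x,s).
\end{equation*}
Taking $f \in C_c^\infty(\spctim)$, I would rewrite the integrand in ``compensated'' form
\begin{equation*}
\phi_f(z,w;x,s) := f(x+z, s+w) - f(x,s) - z_i \1(\|z\| < 1) \del_{x_i} f(x,s),
\end{equation*}
and split each resulting integral over $B_\eps := \{(z,w) : \|z\| < \eps, \, 0 \le w < \eps\}$ and its complement. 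Outside $B_\eps$, $\phi_f$ is a bounded continuous function vanishing in a neighborhood of the origin, so (2.4) gives convergence to $\int_{B_\eps^c} \phi_f \, \Pi(dz,dw|x,s)$. Inside $B_\eps$, a Taylor expansion produces
\begin{equation*}
\phi_f(z,w;x,s) = w\, \del_s f(x,s) + \tfrac{1}{2} z_i z_j \del_{x_i}\del_{x_j} f(x,s) + R(z,w;x,s),
\end{equation*}
with $|R| \le C_f (\|z\|^3 + w^2 + \|z\|w)$; integrating against $n K^n$ and applying (2.2)--(2.3) yields $c\, \del_s f + \tfrac{1}{2} a_{ij} \del_{x_i}\del_{x_j} f$ in the limit $n \to \infty$ followed by $\eps \downarrow 0$. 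The uncompensated piece $\del_{x_i} f(x,s) \cdot n \int z_i \1(\|z\| < 1)\, K^n(dz,dw|x,s)$ is treated analogously: its $B_\eps$-part converges to $b_i\, \del_{x_i} f$ by (2.1), while its $B_\eps^c$-part combines with the large-jump integral to reproduce the $-z_i \1(\|z\|<1) \del_{x_i} f$ compensator under $\Pi$ appearing in $\mathcal{A}$.

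Once $\mathcal{A}^n f \to \mathcal{A} f$ uniformly is established on the core $C_c^\infty(\spctim)$, the Ethier--Kurtz convergence theorem closes the argument: the semigroups $e^{r \mathcal{A}^n}$ converge strongly to $T_r$, and since $Y^n(0) = (x_0,s_0)$ deterministically, this upgrades to weak convergence of processes in $D_{\spctim}[0,\infty)$ under the $J_1$ topology. I expect the main obstacle to lie in the middle step: carefully coordinating the order of the limits $n \to \infty$ and $\eps \downarrow 0$ so that the small-jump regime (controlled by (2.1)--(2.3)) dovetails with the large-jump regime (controlled by (2.4)), and bounding the cubic Taylor remainder $R$ uniformly in $(x,s)$ in a way compatible with the L\'evy-measure constraint $\int (\|z\|^2 \wedge 1 + w \wedge 1)\, \Pi(dz,dw|x,s) < \infty$, which in particular forces one to verify that $n K^n$ restricted to $B_\eps$ has $\|z\|^3$- and $w^2$-integrals of order $o(1)$ as $\eps \downarrow 0$, uniformly in $n$.
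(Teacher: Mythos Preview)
Your approach is sound but follows a different route from the paper. The paper invokes Theorem IX.4.8 of Jacod--Shiryaev, treating $Y^n(r) = (A^n_{L(nr)}, D^n_{L(nr)})$ as a semimartingale and verifying convergence of its \emph{characteristics} $(\mathbf B^n, \mathbf C^n, \mathbf A^n, \Pi^n)$ relative to the truncation function $h(y,w) = (y,w)\1\{\|y\|<1,\ 0<w<1\}$: conditions (2.1)--(2.4) are read off directly as convergence of the local coefficients $b_i^n, \tilde c^n, \tilde a_{ij}^n$ and the jump kernel $nK^n$, and hypothesis (2) pins down the limiting Feller semimartingale.

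Your generator route via Ethier--Kurtz is the natural alternative, and your $B_\eps$ / $B_\eps^c$ split is precisely what the semimartingale triplet encodes through its truncation function. The practical trade-off is this: Ethier--Kurtz requires $\mathcal A^n f \to \mathcal A f$ \emph{uniformly} on a core, whereas Jacod--Shiryaev IX.4.8 is stated in terms of convergence of the predictable characteristics evaluated along the process, which is more forgiving when coefficients depend on $(x,s)$. Since (2.1)--(2.4) are posed only pointwise in $(x,s)$, the semimartingale route avoids having to upgrade pointwise generator convergence to uniform convergence; your route would need an extra equicontinuity or local-uniformity argument (and the verification that $C_c^\infty(\spctim)$ is genuinely a core for $\mathcal A$, which is not literally part of hypothesis (2)). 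This is not a flaw in your plan---you already flag the order-of-limits and remainder issues as the crux---but it explains why the paper's choice is marginally cleaner here. What your approach buys is a more transparent connection between the moment conditions (2.1)--(2.4) and the explicit form of $\mathcal A$, without importing the full semimartingale machinery.
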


A proof is given in the appendix.

\begin{remark}
A sufficient condition for \eqref{eq:DAgen} to be the generator
of a Feller semigroup is that the coefficients 
$a_{ij}(x,s)$, $b_i(x,s)$, $c(x,s)$ and
$\Pi(\cdot | x,s)$ satisfy certain growth and Lipschitz conditions
\cite[Ch~6]{Applebaum}.
In this case there exist unique solutions to stochastic differential
equations whose semigroup is $(T_r)_{r \ge 0}$. 

\end{remark}

\begin{remark}
That $\Pi(\cdot | x,s)$ is a L\'evy measure for every 
$(x,s) \in \spctim$ means that it is supported on
$\spctim \setminus \{(0,0)\}$ and satisfies
\begin{align*}
\int_{z \in \Rd} \int_{w \ge 0} 
\left(1 \wedge \|(z,w)\|^2 \right) \Pi(dz,dw|x,s) < \infty.
\end{align*}
Since all measures $K^n(\cdot | x,s)$ are supported on 
$\Rd \times (0,\infty)$ (i.e.\ waiting times are strictly
positive) it follows that $\Pi(\cdot | x,s)$ is in fact supported 
on $\Rd \times [0,\infty) \setminus \{(0,0)\}$.
Readers familiar with L\'evy processes will recognize that
the requirement that the limiting process $D(t)$ be strictly
increasing a.s.\ in fact is equivalent to
$$\int_{z \in \Rd} \int_{w \ge 0} 
\left(1 \wedge (\|z\|^2 + w) \right) \Pi(dz,dw|x,s) < \infty.$$
\end{remark}

\begin{example}
\label{ex:AD}
Define the kernels $K^n$ via 
\begin{align*}
K^n(B \times (w, \infty) | x,s) 
= \left(1 \wedge \frac{w^{-\beta}}{n \Gamma(1-\beta)} \right)
 \mathcal N(B| \bs b(x,s+w) / n, \sigma^2/n \cdot \bs I), 
\quad B \subset \Rd, \quad w > 0,
\end{align*}
where $\Gamma$ is the Gamma-function, $0 < \beta < 1$,
$\bs b(x,s)$ is vector valued and $\bs I$ the $d \times d$ unit
matrix. 
As discussed further above, each kernel $K^n$ governs a CTRW
process, which is subdiffusive with coefficient $\beta$,
meaning that waiting times have the power-law distribution
\begin{align*}
\pr(D^n_{k+1} - D^n_k > w) 
= 1 \wedge \frac{w^{-\beta}}{n \Gamma(1-\beta)}.
\end{align*}
Jumps are biased according to the external force $\bs b(x,t)$, which
is evaluated at the time of a jump.
It can be checked that the four limit statements from
Theorem~\ref{thm:spctim-conv} are satisfied with 
$b_i(x,s)$ as given, $a_{ij}(x,s) = \sigma^2 \delta_{ij}$ (Kronecker-delta), $c(x,s) = 0$ and 
$\Pi(dz, dw | x,s) = \delta_0(dz) w^{-1-\beta}\,dw / \Gamma(1-\beta)$ (Here $\delta_0$ denotes the Dirac measure concentrated at $0 \in \Rd$).

The continuum process $\{(A(r),D(r))\}_{r \ge 0}$ is then 
such that $D(r)$ is a $\beta$-stable subordinator,
i.e.\ a L\'evy process with non-decreasing sample paths
\cite{Bertoin04}. 
Since $\Pi$ puts infinite measure on the positive real line,
$D(r)$ is strictly increasing, 
and $A(r)$ is a diffusion process with constant diffusivity 
$\sigma^2 \cdot \bs I$ and drift given by $\bs b(A(r), D(r))\,dr$. 
Its representation as a stochastic differential equation is
\begin{align*}
d A(r) = b(A(r-),D(r-))\,dr + \sigma^2\,dW(r)
\end{align*}
where $W(r)$ is $d$-dimensional standard Brownian motion.
\end{example}

\section{The Semi-Markov property}

We have seen that from the sequence $(A^n_k, D^n_k)_{k \in \mathbb N_0}$
the trajectory of a CTRW $X^n(t)$ can be uniquely reconstructed. 
The $\Rd$-valued CTRW $X^n(t)$ is not a Markov process, but the 
$\spctim$-valued process $(X^n(t), V^n(t))$ is;
Here, $V^n(t)$ is the ``residence time'' of a CTRW (i.e.\ the time which
has passed since its last jump), defined as
\begin{align*}
V^n(t) = t - D^n_k, \quad \text{ where $k$ is such that } \quad D^n_k \le t < D^n_{k+1}.
\end{align*}
To see the Markov property, note that for any $\tau > 0$, 
\begin{align*}
  \ex[f(X(t+\tau), V(t + \tau)) | (X_s, V_s): s \le t]
= \ex[f(X(t+\tau), V(t + \tau)) | (X_s, V_s): s \in [D^n_k, t]]
\\
= \ex[f(X(t+\tau), V(t + \tau)) | (X_s, V_s): s \in [t - V(t), t]]
= \ex[f(X(t+\tau), V(t + \tau)) | (X_t, V_t)],
\end{align*}
where the first equality follows from the renewal property of the CTRW,
and the last equality from $X(s) = X(t)$ and $V(s) = V(t) + s - t$
on $s \in [t-V(t), t]$. 

The following theorem shows that if the convergence
$$\left\lbrace\left(A^n_{L(nr)}, D^n_{L(nr)}\right)\right\rbrace_{r \ge 0} \stackrel{J_1}{\to}\{(A(r),D(r))\}_{r \ge 0}$$
of the space-time valued processes holds as in 
Theorem~\ref{thm:spctim-conv}, then 
the CTRWs \& residence time processes $\{(X^n(t), V^n(t))\}_{t \ge 0}$
also converge.

\begin{theorem}
\label{th:SM-conv}
Let $K^n$ be a sequence of transition kernels on $\spctim$, 
$(x_0,s_0)$ a starting point, $X^n(t)$ the corresponding sequence of
CTRWs, and $V^n(t)$ the sequence of residence time processes. 
If assumptions 1.\ and 2.\ of Theorem~\ref{thm:spctim-conv} hold
and if the process $D(r)$ has a.s.\ strictly increasing sample paths, 
then the process sequence $\{(X^n(t),V^n(t))\}_{t \ge s_0}$ converges
weakly (with respect to the Skorokhod $J_1$ topology). 
The limiting $\spctim$ valued process $\{(X(t),V(t))\}_{t \ge s_0}$
has sample paths which are right-continuous with existing left-hand
limits, and is given by 
\begin{align*}
X(t) &= \lim_{\eps \downarrow 0} \xi(t + \eps), 
&
\xi(t) &:= \lim_{\eps \downarrow 0} A(E(t) - \eps)
\\
V(t) &= \lim_{\eps \downarrow 0} \eta(t + \eps), 
&
\eta(t) &:= t - \lim_{\eps \downarrow 0} D(E(t) - \eps)
\end{align*}
where $\{(A(r),D(r))\}_{r \ge 0}$ is as in 
Theorem~\ref{thm:spctim-conv} and 
\begin{align*}
E(t) := \inf\{r \ge 0: D(r) > t\}.
\end{align*}
\end{theorem}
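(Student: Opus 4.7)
The plan is to express each $(X^n,V^n)$ as a functional of $(A^n_{L(n\cdot)}, D^n_{L(n\cdot)})$ matching the one that defines $(X,V)$ from $(A,D)$, and then pass to the limit via Theorem~\ref{thm:spctim-conv}. Set $\tilde A^n(r) := A^n_{L(nr)}$, $\tilde D^n(r) := D^n_{L(nr)}$, and $\tilde E^n(t) := \inf\{r \ge 0 : \tilde D^n(r) > t\}$. Since $L(nr)$ reaches $N^n(t) := \max\{k : D^n_k \le t\}$ for the first time precisely at $r = \tilde E^n(t)-$, a short bookkeeping check gives
\begin{align*}
X^n(t) = \tilde A^n\bigl(\tilde E^n(t)-\bigr), \qquad V^n(t) = t - \tilde D^n\bigl(\tilde E^n(t)-\bigr),
\end{align*}
which is the prelimit analogue of the formulas defining $\xi$ and $\eta$. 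By Theorem~\ref{thm:spctim-conv} and the Skorokhod representation theorem, I may assume $(\tilde A^n, \tilde D^n) \to (A, D)$ almost surely in the $J_1$ topology.

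The first step is to upgrade the $J_1$ convergence of $\tilde D^n$ to locally uniform convergence of the first-passage inverses $\tilde E^n \to E$. The a.s.\ strict monotonicity of $D$ forces $E$ to be continuous, and a standard inversion lemma for non-decreasing c\`adl\`ag functions (e.g.\ Whitt's continuous-mapping theorems for generalized inverses) then delivers the desired uniform convergence on compact time sets.

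The second step is to pass to the limit in the composition. Since $E$ is continuous and $\tilde E^n \to E$ uniformly, the continuous-mapping theorem for composition in Skorokhod space yields $J_1$ convergence of $(\tilde A^n(\tilde E^n(\cdot)-), \tilde D^n(\tilde E^n(\cdot)-))$ to $(A(E(\cdot)-), D(E(\cdot)-))$, i.e.\ to the pair $(\xi(\cdot), \cdot - \eta(\cdot))$. The final right-continuous modification $f \mapsto \lim_{\eps \downarrow 0} f(\cdot + \eps)$, which is continuous in the $J_1$ topology, then produces the c\`adl\`ag limit $(X, V)$ asserted in the theorem.

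The principal obstacle is the composition step, since the $J_1$ topology is notoriously ill-behaved under composition. The assumption that $D$ be a.s.\ strictly increasing is crucial in two distinct ways: it makes $E$ continuous, which is exactly the regularity that the composition theorem requires of the inner function, and it keeps the jumps of $E$ isolated, preventing the shared jump times of $\tilde A^n$, $\tilde D^n$ and $\tilde E^n$ (all occurring at jumps of the driving Poisson process $L(n\cdot)$) from telescoping in the limit. Tracking the left-limit convention and verifying the $J_1$-continuity of the right-continuous modification constitute the remaining bookkeeping needed to close the argument.
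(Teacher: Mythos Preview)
Your approach is essentially the same as the paper's: both express $(X^n,V^n)$ as the image of $(\tilde A^n,\tilde D^n)$ under the first-passage/composition/c\`adl\`ag-modification functional and then invoke the continuous mapping theorem, with the strict monotonicity of $D$ guaranteeing $J_1$-continuity of that functional. The only difference is packaging: the paper cites this continuity as a single black-box result (Proposition~2.3 in \cite{StrakaHenry}), whereas you decompose it into the inverse step, the composition step, and the right-continuous modification and appeal to Whitt-type lemmas for each; note in passing that your remark about ``jumps of $E$ isolated'' is a slip, since strict increase of $D$ makes $E$ continuous and hence jump-free.
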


A proof is given in the appendix. 

\begin{remark}
The limiting process from Theorem~\ref{th:SM-conv} has the intuitive
shorthand representation 
\begin{align*}
X(t) &= (A_- \circ E)_+(t), & V(t) &= t - (D_- \circ E)_+(t), 
& t &\ge s_0
\end{align*}
where $\circ$ is the composition of trajectories and a minus / plus 
sign in the subscript
denotes the left-continuous / right-continuous version of a trajectory.
\end{remark}

The special case where $D(r)$ is a strictly increasing L\'evy process
(i.i.d.\ increments) the process $E(t)$ has continuous sample paths
and is called the inverse subordinator (see e.g.\ 
\cite{invSubord}).
In the often discussed model of subdiffusion with space- and
time-dependent forcing \cite{StrakaHenry}, $A(r)$ is a diffusion
process, with drift evaluated at the times $D(r)$ \cite{Weron2008}. The time-change 
of $A(r)$ by $r = E(t)$ is called subordination.
Theorem~\ref{th:SM-conv} above, however, holds in the general situation, 
where jumps of a walker may be coupled with (i.e.\ are not independent
of) the waiting times \emph{in the limit as} $n \to \infty$.
In Example~\ref{ex:AD}, there is a dependence of the jumps on the 
preceding waiting time $w$, through $\bs b(x,s+w)$.
If the external force $\bs b(x,s)$ is evaluated at the beginning
of a waiting time, another type of dependence arises, which results
in different sample paths \cite{Angstmann2015};
In the limit $n\to \infty$, however, this dependence vanishes.
Jumps and waiting times remain coupled in the limit if and only if
$\Pi(B|x,s) > 0$, 
where
$B:= \{(z,w) \in \spctim : z \neq 0, w > 0\}$
(but this is not the case in Example~\ref{ex:AD}).
The case where $\pi(B|x,s) = 0$ for such $B$ for all $(x,s)$ is called the
\emph{uncoupled} case. 

\begin{remark} \label{rm:uncoupled}
In the uncoupled case, the CTRW limit has the simpler 
representation
\begin{align*}
X(t) &= A(E(t)), \quad t \ge s_0
\end{align*}
(see \cite{StrakaHenry}).
\end{remark}

\begin{example}
The sequence $X^n(t)$ of subdiffusive CTRWs from Example~\ref{ex:AD}
thus converges to the process $X(t) = (A_- \circ E)_+(t)$, and according 
to Remark~\ref{rm:uncoupled} $X(t) = A(E(t))$. 
The probability densities of $X(t)$, if they exist, solve the fractional
Fokker-Planck equation
\begin{align*}
\frac{\del}{\del t} p(x,t) = \mathcal L \frac{\del^{1-\beta}}{\del t^{1-\beta}} p(x,t), \quad p(x,0) = p_0(x),
\end{align*}
where the Fokker-Planck operator is given by
\begin{align*}
\mathcal L f(x,t) = 
\sigma^2 \Delta_{xx} f(x,t) 
- \nabla_x [\bs b(x,t) f(x,t)],
\end{align*}
(compare \cite{HLS10PRL,BaeumerStraka16,Magdziarz2014}).
Note that since $0$ is the start of a waiting time for all particles,
the initial condition assumes that all particles
have age $0$, i.e.\ that $\pr(V_0 = 0) = 1$.
\end{example}

\section{Discrete Semi-Markov Processes}

Theorems \ref{thm:spctim-conv} and \ref{th:SM-conv} provide limit
theorems which are applicable to a large class of CTRW limits, 
and show that the Markov property holds for CTRWs as well as for their
limit processes.
In this section, we assume that a CTRW limit process $X(t)$ is given, 
and construct a sequence of discrete CTRWs $X^n(t)$ which converges
to $X(t)$. 
Any member $X^n(t)$ assumes values on a discrete spatial lattice, 
in a fashion similar to \cite{Angstmann2015}.
Rather than integrating into the history of $X^n(t)$, however, our goal
here is to implement the Markovian dynamics of $(X^n(t),V^n(t))$,
and thus to become able to directly incorporate distributions of
residence times into the initial condition.
With discrete Markovian dynamics, the master equations for the evolution 
of probability functions of $(X^n(t),V^n(t)$ can then be 
straightforwardly implemented, see the next section.

\paragraph{Simplifying Assumptions.}
Recall that according to Theorem~\ref{thm:spctim-conv}, a CTRW limit
process is characterized by the coefficient functions 
$a_{ij}(x,s)$, $b_i(x,t)$, $c(x,t)$ and the space-time L\'evy kernel 
$\Pi(dz,dw|x,t)$.
For simplicity, we narrow down the class of CTRW limits that we
consider in the remainder of this article.
We assume only nearest neighbor jumps on the spatial lattice, 
which entails that the L\'evy measures have the representation
$\Pi(dz,dw|x,t) = \delta_0(dz) \psi(dw|x,t)$ for some measures $\psi$ on
$(0,\infty)$, and the dynamics are \emph{uncoupled}. 
We further assume that $\psi(dw|x,t) = \psi(dw)$, i.e.\
waiting times are homogeneous.
To avoid speaking of degenerate CTRW limits, we assume that the measure
$\psi(dw)$ is infinite (i.e.\ has a non-integrable singularity at $0$, 
even though $\psi(\{0\}) = 0$)\footnote{Indeed, if $\psi(dw)$ is a finite measure, then the process $D(r)$
is a step process, and hence the limiting CTRW is again a CTRW.}.
Finally, we focus on the one-dimensional case and assume that 
$c(x,s) \equiv c$ and $a(x,s) \equiv a$ are constant.

\paragraph{}
Instead of working with the measure $\psi(dw)$, it is more convenient
in our setting to analyse the (right-continuous) tail function 
$\Psi(w) := \psi((w,\infty))$ instead.
Then infiniteness of $\psi(dw)$ translates to 
$\lim_{w \downarrow 0} \Psi(w) = \infty$, and the L\'evy measure
property to $\int_0^1 \Psi(w)\,dw < \infty$, as can be seen by
integration by parts. 
The typical example to
have in mind is $\Psi(w) = w^{-\beta} / \Gamma(1-\beta)$ for 
$\beta \in (0,1)$, and
$\psi(dw) = \beta w^{-1-\beta}/ \Gamma(1-\beta)\,dw$ (subdiffusion).
To arrive at a computational algorithm for the master equations
for the laws of $(X^n_t, V^n_t)$, we need to give a sequence of 
transition kernels $K^n(dz,dw|x,s)$ which are supported on a lattice, 
and which satisfy \eqref{eq:K1}--\eqref{eq:K4}.
With this in mind, we define 
\begin{align}\label{eq:Hn} 
H^n(w) := \frac{\Psi(\left\lceil w/\tau \right\rceil \tau - \tau_2)}{n}, 
\quad w \ge 0,
\end{align}
where we define the ceiling function as 
$\lceil x \rceil := \min\{k \in \mathbb Z: k > x\}$.
The constants $\tau$ and $\tau_2$ depend on $n$ and are defined as follows:
\begin{align*}
\tau_1(n) := \Psi^{-1}(n), \quad 
\tau_2(n) :=  c/n, \quad \tau(n) := \tau_1(n) + \tau_2(n), \quad c \ge 0.
\end{align*}
It can then be checked that $H^n(0) = 1$, that $H^n$ is right-continuous and
decreases to $0$ as $w \to \infty$.
Thus $H^n(w)$ is the tail function of a probability measure supported on the lattice
$\tau \mathbb N = \{\tau, 2\tau, 3\tau, \ldots \}$. 
Since $H^n(w)$ is of finite variation,
one can define the Lebesgue-Stieltjes measure $dH^n$ via
$dH^n((a,b]) = H^n(b) - H^n(a)$.
Note however, that since $H^n(w)$ is decreasing, this measure is 
negative.
Since $H^n(w)$ is piecewise constant, with jumps in the set $\tau \mathbb N$,
$-dH^n$ is a discrete probability measure on $\tau\mathbb N$.
Finally, define a sequence of CTRW processes $X^n(t)$ 
(and their residence time processes $V^n(t)$) via their transition kernel: 
\begin{multline} \label{eq:discrete-K}
K^n(dz,dw|x,s) := -dH^n(w) \left[ \ell(x,s+w) \delta_{-\chi}(dz) 
+ r(x,s+w) \delta_\chi(dz) \right]
\\
\ell(x,s) := (1-\chi b(x,s)/a)/2, 
\quad r(x,s) := (1+\chi b(x,s)/a)/2, 
\quad \chi^2 = a/n.
\end{multline}
The probabilities $r(x,s)$ and $\ell(x,s)$ to jump right/left need of course
to be positive, which is satisfied for small enough $\chi$.
Given a starting point $x_0$ on the lattice 
$\chi \mathbb Z = \{k \chi: k \in \mathbb Z\}$, the CTRW $X^n(t)$ will
remain on this lattice at all times. 
Moreover, if the starting time is chosen from the lattice $\tau \mathbb N$,
then all jump times will also lie on this lattice. 

The following Lemma will show that $K^n(dz,dw|x,s)$ satisfies requirements
\eqref{eq:K2} and \eqref{eq:K4}:

\begin{lemma}
\label{lem:Psi}
Let $\Psi(w)$ and $H^n(w)$ be as above. 
Then the following two equalities hold:
\begin{align*}
&\lim_{\eps \downarrow 0} \lim_{n \to \infty} n \int_{(0,\eps]} w\,dH^n(w) = -c,
\\
&\lim_{n \to \infty} n \int_{(0,\infty)} g(w)\,dH^n(w) = \int_{(0,\infty)} g(w) \, d\Psi(w),
\end{align*}
where $g$ ranges over all real-valued differentiable functions with
compact support in $(0,\infty)$. 
\end{lemma}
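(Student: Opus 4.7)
The plan is to reduce both claims to integrals of $nH^n$ against test functions via Stieltjes integration by parts, and then to exploit that $nH^n(w) = \Psi(\lceil w/\tau \rceil \tau - \tau_2)$ is a piecewise-constant approximation to $\Psi(w)$: since $\tau, \tau_2 \downarrow 0$ as $n \to \infty$, the argument $\lceil w/\tau \rceil \tau - \tau_2$ tends to $w$, so by right-continuity of $\Psi$ together with the fact that $\Psi$ has only countably many discontinuities, $nH^n(w) \to \Psi(w)$ at Lebesgue-a.e.\ $w$.

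For the second identity, integration by parts on both sides (all boundary terms vanish because $g$ has compact support in some $[a,b] \subset (0,\infty)$) reduces the claim to
\[
\lim_{n \to \infty} \int_a^b g'(w)\, nH^n(w)\,dw = \int_a^b g'(w)\,\Psi(w)\,dw.
\]
For $n$ large enough that $\tau + \tau_2 < a/2$, the argument of $\Psi$ appearing in $nH^n(w)$ on $[a,b]$ lies in $[a/2, \infty)$, so $|g'(w)\, nH^n(w)| \le \|g'\|_\infty\,\Psi(a/2) < \infty$ is a uniform dominating function; I would then invoke dominated convergence.

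For the first identity the analogous integration by parts gives
\[
n\int_{(0,\eps]} w\,dH^n(w) = \eps\, nH^n(\eps) - \int_0^\eps nH^n(w)\,dw,
\]
and I would choose $\eps$ from the (co-countably many) continuity points of $\Psi$, so that the boundary term tends to $\eps\,\Psi(\eps)$. For the Lebesgue integral, writing $K_n := \lceil \eps/\tau \rceil$, I would split into three pieces: the first rectangle on $[0,\tau]$, the interior Riemann sum $\sum_{k=2}^{K_n-1} \tau\,\Psi((k-1)\tau + \tau_1)$, and a leftover rectangle on $((K_n-1)\tau, \eps]$. The first rectangle contributes $\tau\,\Psi(\tau_1) = n\tau = n\tau_1 + c$, and $n\tau_1 = \tau_1\,\Psi(\tau_1) \to 0$ because the L\'evy-measure condition $\int_0^1 \Psi\,dw < \infty$ combined with monotonicity forces $w\,\Psi(w) \to 0$ as $w \downarrow 0$ (via $w\,\Psi(w) \le 2\int_{w/2}^w \Psi(u)\,du$). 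The interior Riemann sum, being a lower/upper sum of the decreasing function $\Psi$, is squeezed between $\int_{\tau+\tau_1}^{(K_n-1)\tau+\tau_1}\Psi$ and $\int_{\tau_1}^{(K_n-2)\tau+\tau_1}\Psi$, both of which tend to the finite quantity $\int_0^\eps \Psi(u)\,du$. The leftover rectangle is $O(\tau)$, hence vanishes. Combining yields $\eps\,\Psi(\eps) - c - \int_0^\eps \Psi(u)\,du$, and letting $\eps \downarrow 0$ produces the desired limit $-c$.

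The main obstacle, and the source of the constant $c$, is the isolated first-rectangle contribution $\tau\,\Psi(\tau_1) = n\tau$: while every other term in the Riemann sum converges to an ordinary Lebesgue integral against $\Psi$, this width-$\tau$, height-$n$ rectangle at the origin encodes the infinite singular mass of $\Psi$ near zero, and the time shift $\tau_2 = c/n$ built into the discretization is precisely what retains the finite contribution $n\tau_2 = c$ in the limit.
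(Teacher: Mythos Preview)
Your proposal is correct and follows essentially the same route as the paper: integration by parts to convert the Stieltjes integral against $dH^n$ into a boundary term plus a Lebesgue integral of $nH^n$, then pointwise convergence $nH^n(w) \to \Psi(w)$ combined with a domination/squeezing argument. The second identity is handled identically in both.

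For the first identity the bookkeeping differs slightly. The paper integrates by parts on $[\tau,\eps]$, so the constant $c$ emerges from the lower boundary term $\tau\, nH^n(\tau) \sim \tau_1 n + c$; you integrate by parts on $(0,\eps]$ with vanishing lower boundary term, so $c$ instead appears as the first rectangle $\int_0^\tau nH^n = n\tau = n\tau_1 + c$ of the Lebesgue integral. These are two ways of accounting for the same contribution. Your treatment is in fact tidier in two respects: (i) your proof that $\tau_1\,\Psi(\tau_1) \to 0$ via $w\Psi(w) \le 2\int_{w/2}^{w}\Psi$ uses only the integrability hypothesis $\int_0^1 \Psi < \infty$, whereas the paper invokes an unjustified power-law bound $\Psi(w)\le Cw^{-\beta}$; and (ii) your Riemann-sum squeeze for the interior integral is more transparent than the paper's bare appeal to dominated convergence, where exhibiting a uniform dominating function near $0$ on the $n$-dependent domain $[\tau,\eps]$ is not entirely immediate.
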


A proof is given in the appendix. 
The following result may be interpreted as the consistency of our discrete
Semi-Markov scheme:

\begin{theorem}
\label{th:consistency}
Let the simplifying assumptions as set out above hold, and consider the
sequence of discrete CTRWs $X^n(t)$ with residence time processes $V^n(t)$, 
for $n \in \mathbb N$, defined via the kernels \eqref{eq:discrete-K}
and starting point $x_0$ at time $0$.
Then $(X^n(t), V^n(t))$ converges\footnote{You guessed it! 
Weakly with respect to Skorokhod's $J_1$ topology.} 
to the process $(X(t), V(t))$ as given in Theorem~\ref{th:SM-conv}.
That is, $X(t) = A(E(t))$, where 
\begin{enumerate}
\item[i)]
$A(r)$ is a diffusion process with constant 
diffusivity $a$ and drift $b(A(r),D(r))$, with $A(0) = x_0$
\item[ii)]
$D(r)$ is an independent subordinator (strictly increasing L\'evy process) with 
drift $c$ and L\'evy measure $\psi(dw)$, and
\item[iii)]
$E(t) = \inf\{ u: D(u) > t \}$ is the inverse subordinator.
\end{enumerate}
The process $V(t) = t - (D_- \circ E)_+(t)$ tracks the residence time of $X(t)$,
and $(X(t),V(t))$ satisfy the Markov property.
\end{theorem}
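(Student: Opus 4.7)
The plan is to reduce the claim to an application of Theorem~\ref{th:SM-conv}, which in turn requires verifying (i) the four moment/tail conditions \eqref{eq:K1}--\eqref{eq:K4} of Theorem~\ref{thm:spctim-conv} for the discrete kernels $K^n$, (ii) that the resulting generator $\mathcal A$ generates a Feller semigroup on $C_0(\spctim)$, and (iii) that the limiting subordinator $D(r)$ has strictly increasing sample paths almost surely. Once these are in hand, Theorem~\ref{th:SM-conv} produces the $J_1$-limit $(X(t),V(t))$ exactly in the form asserted, and the uncoupled structure (Remark~\ref{rm:uncoupled}) collapses $X(t) = (A_- \circ E)_+(t)$ to $X(t) = A(E(t))$.

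For (i), I would exploit the product structure of $K^n$. Integrating $z$ out of
\begin{align*}
\int_{|z|<\eps} z\, K^n(dz,dw|x,s) = \chi\bigl(r(x,s+w) - \ell(x,s+w)\bigr)(-dH^n(w)) = \frac{b(x,s+w)}{n}(-dH^n(w)),
\end{align*}
so that $n\!\int_{(0,\eps]}\!\int_{|z|<\eps} z K^n = \int_{(0,\eps]} b(x,s+w)\,(-dH^n)(dw)$. Since $-dH^n$ is a probability measure on $\tau\mathbb N$ whose mass on $(\eps,\infty)$ equals $H^n(\eps) = \Psi(\lceil\eps/\tau\rceil\tau-\tau_2)/n \to 0$ as $n\to\infty$ for fixed $\eps>0$, and since $b$ is continuous, the iterated limit evaluates to $b(x,s)$, giving \eqref{eq:K1}. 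For \eqref{eq:K3}, the analogous calculation gives $\int z^2 K^n = \chi^2 (-dH^n) = (a/n)(-dH^n)$, and $a\,(1 - H^n(\eps)) \to a$. Statements \eqref{eq:K2} and \eqref{eq:K4} are direct applications of Lemma~\ref{lem:Psi} once $z$ has been integrated out: \eqref{eq:K2} because $\int z^0 K^n = -dH^n$ and so $-n\!\int_{(0,\eps]} w\,dH^n \to c$; and \eqref{eq:K4} because $\ell(x,s+w), r(x,s+w) \to 1/2$ and $g(\pm\chi,w) \to g(0,w)$ uniformly on the support of $g$, so the limit is $\int g(0,w)\,\psi(dw)$, matching $\Pi(dz,dw|x,s) = \delta_0(dz)\,\psi(dw)$.

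For (ii), the limiting generator reads
\begin{align*}
\mathcal A f(x,s) = b(x,s)\,\del_x f(x,s) + c\,\del_s f(x,s) + \tfrac{a}{2}\,\del_{xx} f(x,s) + \int_0^\infty \bigl[f(x,s+w) - f(x,s)\bigr]\psi(dw),
\end{align*}
where the compensator in \eqref{eq:DAgen} vanishes since $\Pi$ is supported on $\{z=0\}$. This describes $(A(r),D(r))$ in which $D$ is an independent L\'evy subordinator with drift $c$ and L\'evy measure $\psi$, while $A$ is the solution of the SDE $dA(r) = b(A(r-),D(r-))\,dr + \sqrt{a}\,dW(r)$ driven by an independent Brownian motion. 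Under the implicit boundedness and continuity of $b$, the Feller property follows from standard results on SDEs with jumps (e.g.\ \cite[Ch.~6]{Applebaum}). For (iii), infiniteness of $\psi$ (explicitly assumed) forces infinitely many jumps of $D$ on every non-degenerate interval, so $D$ is strictly increasing a.s.

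The main obstacle is the bookkeeping in (i): because $-dH^n$ is a probability measure whose mass \emph{concentrates} near the origin while its total mass stays equal to one, the two limits $n\to\infty$ and $\eps\downarrow 0$ do not commute and must be taken in the prescribed order. Lemma~\ref{lem:Psi} captures precisely the content of this interchange for the two singular moments ($\int w\,dH^n$ and $\int g\,dH^n$), and the remaining two conditions \eqref{eq:K1}, \eqref{eq:K3} are controlled by the fact that the total mass of $-dH^n$ on $(\eps,\infty)$ tends to zero. Once these verifications are complete, Theorem~\ref{th:SM-conv} delivers $(X^n,V^n) \Rightarrow (X,V)$ in the Skorokhod $J_1$ topology, and Remark~\ref{rm:uncoupled} gives the stated representation $X(t) = A(E(t))$.
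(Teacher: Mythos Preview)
Your proposal is correct and follows essentially the same route as the paper's proof: verify \eqref{eq:K1} and \eqref{eq:K3} directly from the product form of $K^n$ (using $\chi\to 0$ and the weak concentration $-dH^n \to \delta_0$), invoke Lemma~\ref{lem:Psi} for \eqref{eq:K2} and \eqref{eq:K4}, note that infiniteness of $\psi$ forces $D$ to be strictly increasing, and then apply Theorem~\ref{th:SM-conv} together with Remark~\ref{rm:uncoupled}. The only addition is that you make the Feller condition (assumption~2 of Theorem~\ref{thm:spctim-conv}) explicit via \cite[Ch.~6]{Applebaum}, which the paper leaves implicit.
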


\begin{proof}
Noting that $\chi \downarrow 0$ as $n \to \infty$
and $-dH^n(w) \to \delta(dw)$ (weakly), it is straightforward
to show that \eqref{eq:K1} and \eqref{eq:K3} are satisfied by 
\eqref{eq:discrete-K}. 
Due to Lemma~\ref{lem:Psi}, \eqref{eq:K2} and \eqref{eq:K4} hold as
well. 
Since the L\'evy measure is infinite, $D(r)$ is strictly increasing
a.s., 
and thus Theorem~\ref{th:SM-conv} applies.
\end{proof}

For large $n$, we may hence assume that the distribution of 
$(X^n(t),V^n(t))$ for $t \in \tau\mathbb N$ will be a good 
approximation for the distribution of the
of the CTRW limit $(X(t),V(t))$. 
In the next section we compute these distributions.

\section{Algorithm}

We can now derive a time-stepping algorithm which calculates the
probability functions of the discrete process $(X^n(t),V^n(t))$,
whose state space is $\chi \mathbb Z \times \tau \mathbb N$, and
whose time-steps lie in $\tau \mathbb N$. 
Recall that for $k \in \mathbb N$,  $H^n(k \tau)$ denotes the
probability that a waiting time of the CTRW $X^n(t)$ is $(k+1)\tau$
or longer.
Therefore conditioning on $V^n(0) = v \in \tau \mathbb N$, we are conditioning on the waiting
time being longer  than $v$, that is $v + \tau$ or longer.
Hence observing the transition kernel \eqref{eq:discrete-K} we find:
\begin{align} \label{eq:prob-step}
\begin{split}
&\pr(X^n(\tau) \in dy, V^n(\tau) \in du | X^n(0) = x, V^n(0) = v)
\\
&= \frac{H^n(v+\tau)}{H^n(v)} \delta_x(dy) \delta_{v+\tau}(du)
+ \left(1 - \frac{H^n(v+\tau)}{H^n(v)}\right)
[\ell(x,\tau) \delta_{x-\chi}(dy) + r(x,\tau) \delta_{x+\chi}(dy)]
\delta_0(du)
\end{split}
\end{align}
where $x \in \chi \mathbb Z$, $v \in \tau \mathbb N$
and where we set $H^n(0) := 1$. 
Writing
$$\xi(i,j,k) := \pr(X_{k\tau} = i\chi, V_{k\tau} = j\tau), \quad
h(j) = H^n(j \tau),$$
we may then write a master equation for the evolution of these probabilities: 
The first term on the right-hand side of \eqref{eq:prob-step}
corresponds to the case where a particle remains on its site $x$
for another time step $\tau$, and hence we have
\begin{align*}
&\xi(i,j,k+1) = \frac{h(j)}{h(j-1)} \xi(i,j-1,k), \quad j \ge 1.
\end{align*}
The second term corresponds to the complementary case: 
a particle jumps to one of the neighboring lattice sites
$x-\chi$ or $x + \chi$, and its age is reset to $0$.
At a given lattice site $i$ the probability mass is hence obtained
by a weighted sum over all residence times of the neighbouring lattice
sites:
\begin{align*}
&\xi(i,0,k+1) 
\\
&= \sum_{j=0}^\infty \left( 1 - \frac{h(j+1)}{h(j)} \right)
\left(\ell((i+1)\chi,(k+1)\tau)\xi(i+1,j,k)
+r((i-1)\chi,(k+1)\tau)\xi(i-1,j,k)\right).
\end{align*}
The CTRW limit density $\rho(x,t)$ of the process $X(t)$ can then be approximated through
\begin{align*}
&\rho(x,t) \approx \sum_{j=0}^{\infty} \xi(i,j,k), 
\quad i = [x/\chi], \quad k = [t/\tau].
\end{align*}

In practice, the algorithm runs on a finite grid
$$\{-L, -L+1, \ldots, 0, \ldots, L-1, L\} \times \{0, 1, \ldots, R\},$$
representing the state space,
and one has to impose additional boundary conditions. 

\paragraph{Spatial Boundary conditions.}
We only consider the one-dimensional case. 
For \emph{absorbing}, or Dirichlet boundary conditions $\rho(b) = 0$
where $b$ is a boundary point, a walker is removed if it walks off
the lattice. 
That is, we set $\ell(-m\chi, k\tau) = 0$ and $r(m\chi, k\tau) = 0$
for all $0 \leq k \leq N$, $\ell(-m\chi, k\tau) = 0$; 
note that on the boundary site, $\ell$ and $r$ hence no longer add
to $1$.

For \emph{reflecting}, or Neumann boundary conditions, a particle
remains at a boundary site whenever it would jump off the lattice, 
and adjust \eqref{eq:discrete-K} accordingly.

\paragraph{Residence time boundary conditions.}
When the residence time of a particle approaches the lattice end at
$R$, we could force it to jump to a neighboring lattice site 
and reset its age to $0$.
This effectively corresponds to a tail function
$\Psi(w) \mathbf 1\{\Psi(w) \ge \Psi(R \tau)\}$, 
and hence we term this the \emph{cutoff} boundary condition.

Below, however, we assume that upon reaching the end of the lattice
at $R$, a particle is not forced to jump, and allow it to remain
at its site $x$ with residence time $R$ if it would not otherwise
jump. 
That is, we set 
$$\xi(i,R,k+1) = \frac{h(R)}{h(R-1)} \xi(i,R-1,k)
+ \frac{h(R+1)}{h(R)} \xi(i,R,k).$$
This means that particles with residence time $R\tau$ remain unchanged
for a geometrically distributed number of time steps, with 
parameter $1 - h(R+1)/h(R)$.
In the scaling limit, this corresponds to an exponential distribution, 
whose rate is
$$\gamma(\mathbf R) := \frac{\psi(\mathbf R)}{\Psi(\mathbf R)}, 
\quad \mathbf R := \lim_{n \to \infty} \tau R$$
(note that as $n \to \infty$, we have $\tau \downarrow 0$ and 
$R \to \infty$). 
This effectively corresponds to a tail function
$$\Psi(w)\mathbf 1\{\Psi(w) \ge \Psi(\mathbf R)\}
+ \Psi(\mathbf R) e^{-\gamma(\mathbf R) (w-\mathbf R)} \mathbf 1\{\Psi(w) < \Psi(\mathbf R)\}$$
and hence we term this procedure the \emph{cross-over} boundary 
condition.

\paragraph{}
Assume now as a general initial condition a probability measure 
$\mu(dx,dv)$, 
and that the aim is to calculate 
$$\int \pr(X_t \in dy, V_t \in du | X_0 = x, V_0 = v) \mu(dx,dv),$$
To this end, we set $\xi(i,j,0) = \mu([i\chi, (i+1)\chi) \times [j\tau, (j+1)\tau))$,
and simply run our algorithm with this initial condition.
Note that $R$ needs to be chosen large enough in order to 
avoid cut-off or cross-over effects for $\Psi(w)$ as discussed above. 
A safe choice is always
$$R = \max\{j: \xi(i,j,0) > 0, |i| \le L\} + N,$$
where $N$ denotes the number of time steps, 
though it may of course be infeasible in cases where 
$\mu$ has unbounded support.

\section{Examples}

Within our framework, we may now compute (approximations of) 
probability distributions of CTRW limits, with varying initial 
residence times, for a variety of models. 
In particular, we may assume any subordinator $D(r)$, and thus
treat a variety of non-Markovian behaviours (see Table~\ref{tab:models}).
Two main regimes occur, depending on whether $\Psi(t)$
has integrable tails or not. 
In the former case, $V(t)$ admits the equilibrium distribution
\begin{align} \label{eq:equi-V}
\pi(B) = \frac{c}{c+g} \delta_0(B)
+ \frac{1}{c+g} \int_B \Psi(w)\,dw
\end{align}
where $g := \int_0^\infty \Psi(w)\,dw$ and $\delta_0$ denotes a Dirac
measure at $0$ \cite{Hor72}.
In the latter case, there exists an invariant measure, but it is
infinite, and hence an equilibrium cannot be reached. 

\paragraph{Tempering.}
Throughout, $\beta \in (0,1)$. 
The tail function $\Psi(w) = w^{-\beta} / \Gamma(1-\beta)$ in the
subdiffusive case is not integrable. 
The \emph{tempered} subdiffusive case is obtained by multiplying the L\'evy
density with an exponential $e^{-\gamma w}$ \cite{Rosinski07}. 
The tail function becomes 
\[ \Psi(t|\beta, \gamma) = \frac{\beta}{\Gamma(1-\beta)} \int_t^\infty w^{-1-\beta} e^{-\gamma w}\,dw 
= \frac{t^{-\beta}e^{-\gamma t} - \gamma \Gamma(1-\beta, t)}{{\Gamma(1-\beta)}}, \quad \gamma \ge 0\]
where $\Gamma(\beta,t)$ denotes the upper incomplete Gamma function. 
This modification makes $\Psi(t|\beta,\gamma)$ integrable for 
$\gamma > 0$, that is, $g < \infty$. 
CTRW limits with these ``tempered dynamics'' appear subdiffusive on
short time scales and diffusive on longer time scales \cite{Stanislavsky2008,StrakaThesis,Gajda2010}. 
Note that for $\gamma = 0$ the above reduces to the subdiffusive case.

\paragraph{Subordinator with drift.}
If the subordinator $D(r)$ has a positive drift constant $c > 0$, 
the resulting growth of $D(r)$ at very short times is proportional
to $cr$. 
Accordingly, the inverse subordinator $E(t)$ also grows
linearly, proportionally to $t/c$ for short times\footnote{A law of the 
iterated logarithm applies for the precise limit, see \cite{Bertoin04}.}.
For larger time scales, the jumps of $D(r)$ will dominate the drift
$c$, if $g \gg c$ (or if $g = \infty$ in the case where $\Psi(w)$
is not integrable).
This means that for long times, the temporal evolution appears 
subdiffusive if $\gamma = 0$ \cite{StrakaThesis}. 
The case $\gamma > 0$ and $c > 0$
has been 
examined in \cite{StrakaThesis}: 
$E(t) \sim t/c$ grows linearly for small time scales. 
For long time scales, $E(t)$ also grows linearly, although with a 
smaller slope. 
To our knowledge, the cross-over between the two regimes at intermediate 
time scales has not been 
looked at in detail but we predict it will show the signatures of
subdiffusive behavior. 

Finally, in the case where $c > 0$ and $\gamma > 0$, 
by the above $g < \infty$, and $c$ and $g$ admit
a nice physical interpretation: 
At equilibrium, $c/(c+g)$ is the fraction of ``mobile'' particles
which have residence time $0$, and $g/(c+g)$ is the fraction 
of ``immobile'' particles, which have been trapped for a time $w$
distributed as $\Psi(w)\,dw / (c+g)$. 
We deem this to be an interesting tempered extension of the so called
``fractal mobile/immobile model'' of \cite{SchumerMIM}.
If $\gamma = 0$, there exists no equilibrium, and all 
mobile particles eventually seep into the immobile phase.

\begin{table}
\centering
\begin{tabular}{r|c|c}
model & tempering parameter $\gamma$ & temporal drift $c$
\\
\hline
\hline
Subdiffusion & $\gamma = 0$ & $c = 0$
\\
\hline
tempered subdiffusion & $\gamma > 0$
& $c = 0$
\\
\hline
fractal mobile-immobile
& $\gamma = 0$
& $c > 0$
\\
\hline
tempered fractal mobile-immobile
& $\gamma > 0$
& $c > 0$
\end{tabular}
\caption{\label{tab:models}
We consider four cases of non-Markovian temporal evolutions, governed
by the inverse subordinator $E(t)$.}
\end{table}

\paragraph{Varying the initial residence time.}
When modelling subdiffusion or tempered subdiffusion, the standard
assumption is that the first waiting time starts at $t =0$, 
which translates to the initial condition $\mu(dx,dv) = \rho_0(dx)
\delta_0(dv)$ (all particles have residence time $0$, and their location 
is distributed according to $\rho_0(dx)$, typically $\rho_0(dx) = 
\delta_0(dx)$, \cite{HLS10PRL}). 
Subdiffusive CTRWs are known to exhibit ageing, which is an indefinite
slowing down of the dynamics as $t$ increases. 
\cite{BarkaiCheng} consider dynamics of CTRW limits where the system
has been prepared at a time $-t_a$, and study the dynamics on the
interval $(0,t)$, for which e.g.\ a Fokker-Planck equation has been
derived in \cite{Busani}. 
This relates to our approach by taking as initial condition the 
probability distribution
$\mu(dx, dv) = \pr[X(t_a) \in dx, V(t_a) \in dv | X(0) = 0, V(0) = 0]$, 
and calculating the probability distributions of
$\pr[X(s)\in dx, V(s) \in dx|\mu]$ for $s \in (0,t)$.


\begin{figure}[h]
\centering
\includegraphics[scale=1]{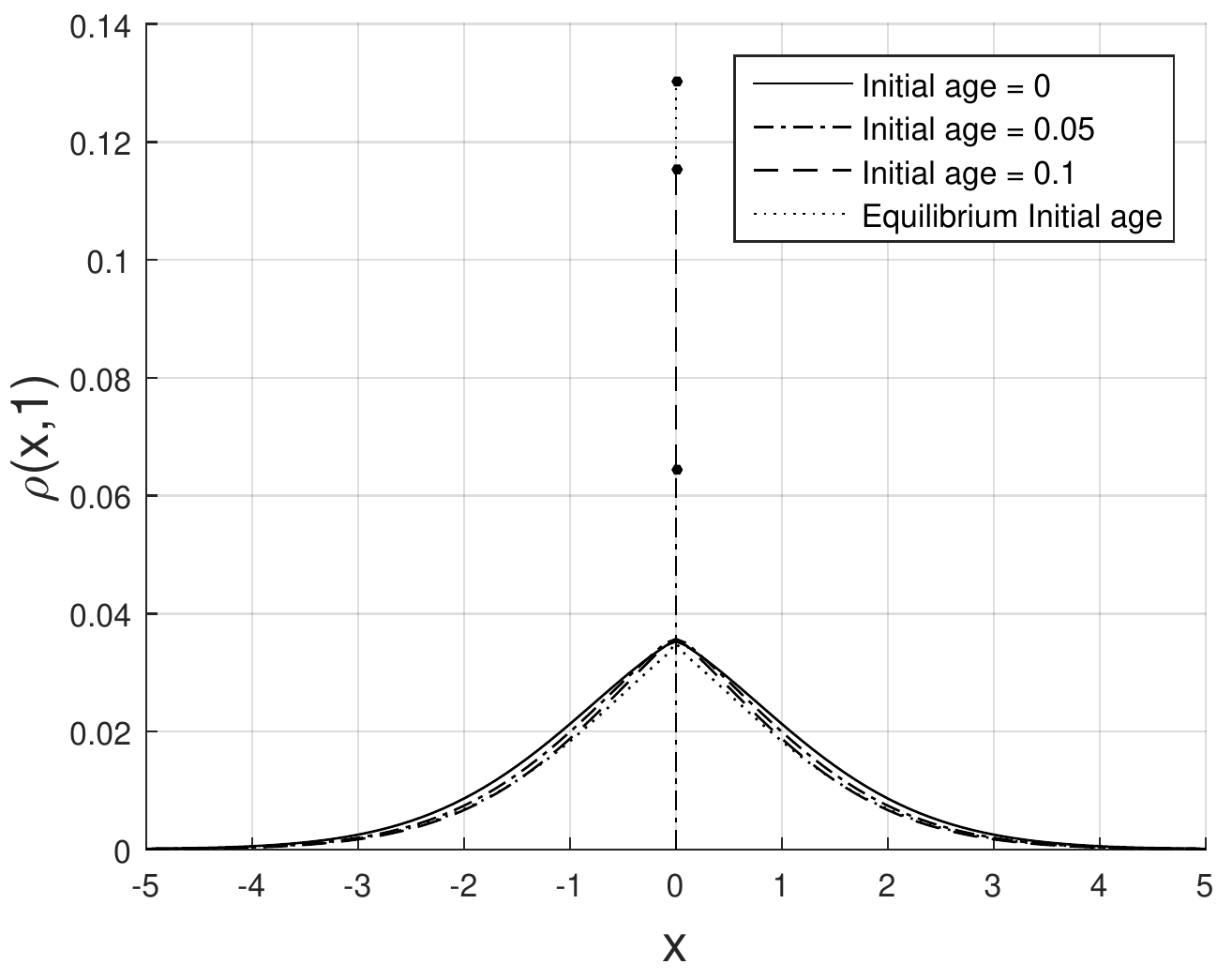}
\caption{\label{fig:rho2} 
Probability distribution $\rho(x,t)$ of a subdiffusive process at 
$t = 1$ with varying initial age condition.
The ``equilibrium initial age'' condition is as in \eqref{eq:equi-V},
with a tempering parameter $\gamma = 1$.
For positive initial residence time, 
this law has a point mass at $x=0$. 
The remaining mass admits a continuous distribution.}
\end{figure}

In the subdiffusive setting, we now examine the impact of a varying
initial residence time on the probability function of a CTRW limit. 
In particular, we calculate the ``Green's functions'' 
$\pr[X(t) \in dx, V(t) \in dv | X(0) = 0, V(0) = v]$
where $v \ge 0$. 
For simplicity, we assume symmetric nearest neighbor
jumps with reflecting boundary condition, and a fractional parameter
$\beta = 0.9$.
Figure~\ref{fig:rho2}, with $v = 0$ shows the distinctive cusp shape 
of the probability density of
subdiffusive CTRW limits (see e.g.\ \cite{Metzler2000}).
On the other hand, if conditioning on $X(0) = 0$ and $V(0) = v$ where $v$ is positive, 
the particle is trapped at $0$, and stays there until time $t$
with probability $\Psi(v+t) / \Psi(v)$; 
Compare \cite[Th~4.1]{SemiMarkovCTRW} which provides a formula for the 
conditional distribution 
$\pr[X(t) \in dx, V(t) \in dv | X(t) = 0, V(t) = v]$. 
Hence the joint distribution of $(X(t), V(t))$, conditioned on 
$X(0) = 0, V(0) = v$, has an atom of mass $\Psi(v+t) / \Psi(v)$
at $(0,v+t)$.
This atom reflects in the marginal distribution of $X(t)$, as 
shown in Figures~\ref{fig:rho2} and \ref{fig:res}. 
The remaining probability mass, as given in \cite[Th~4.1]{SemiMarkovCTRW}, 
is absolutely continuous. 
As $v \to \infty$, the weight 
$\Psi(v+t) / \Psi(v)$ of this atom increases towards $1$.

\begin{figure}
\centering
\includegraphics[scale=0.9]{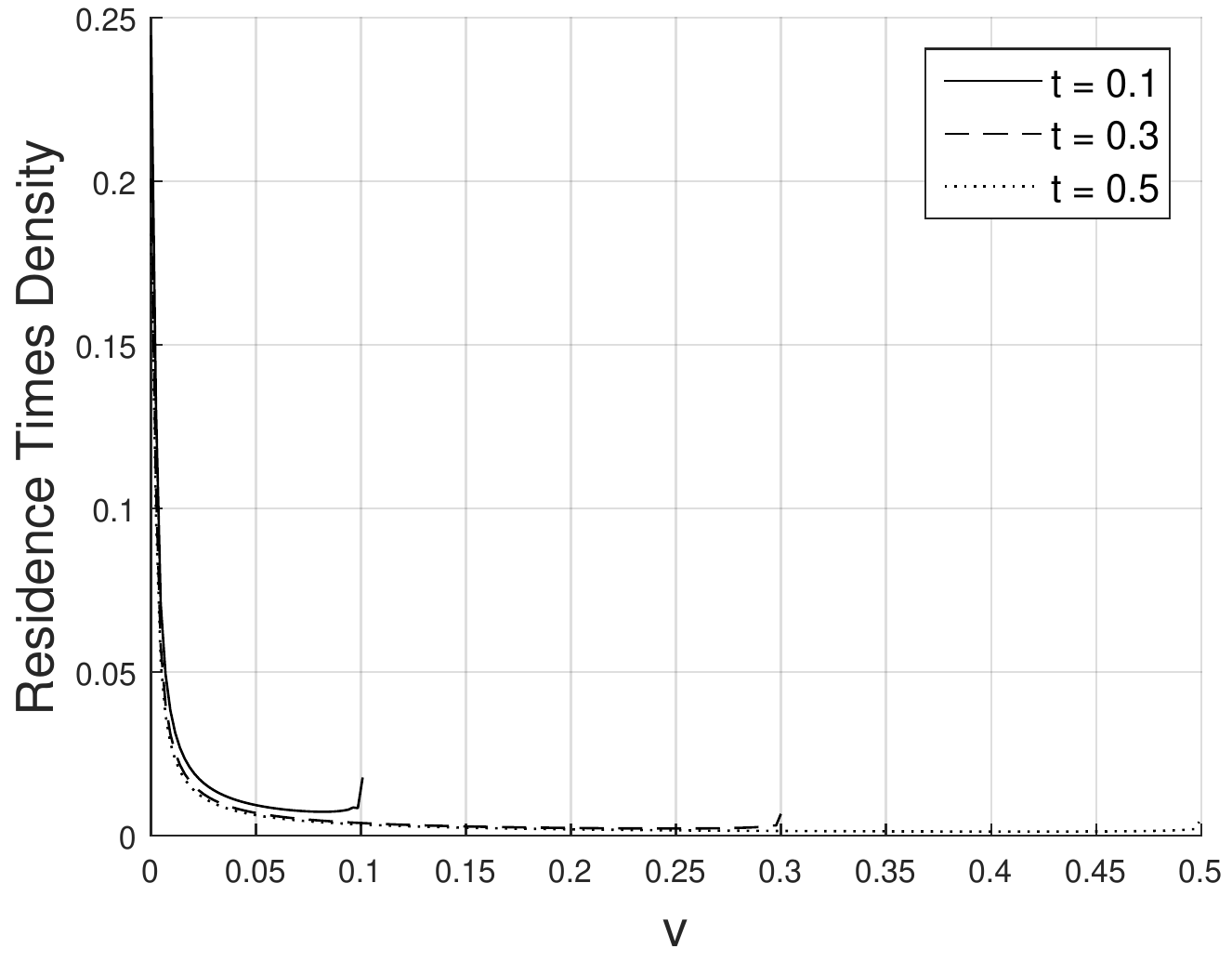}\\
\includegraphics[scale=0.9]{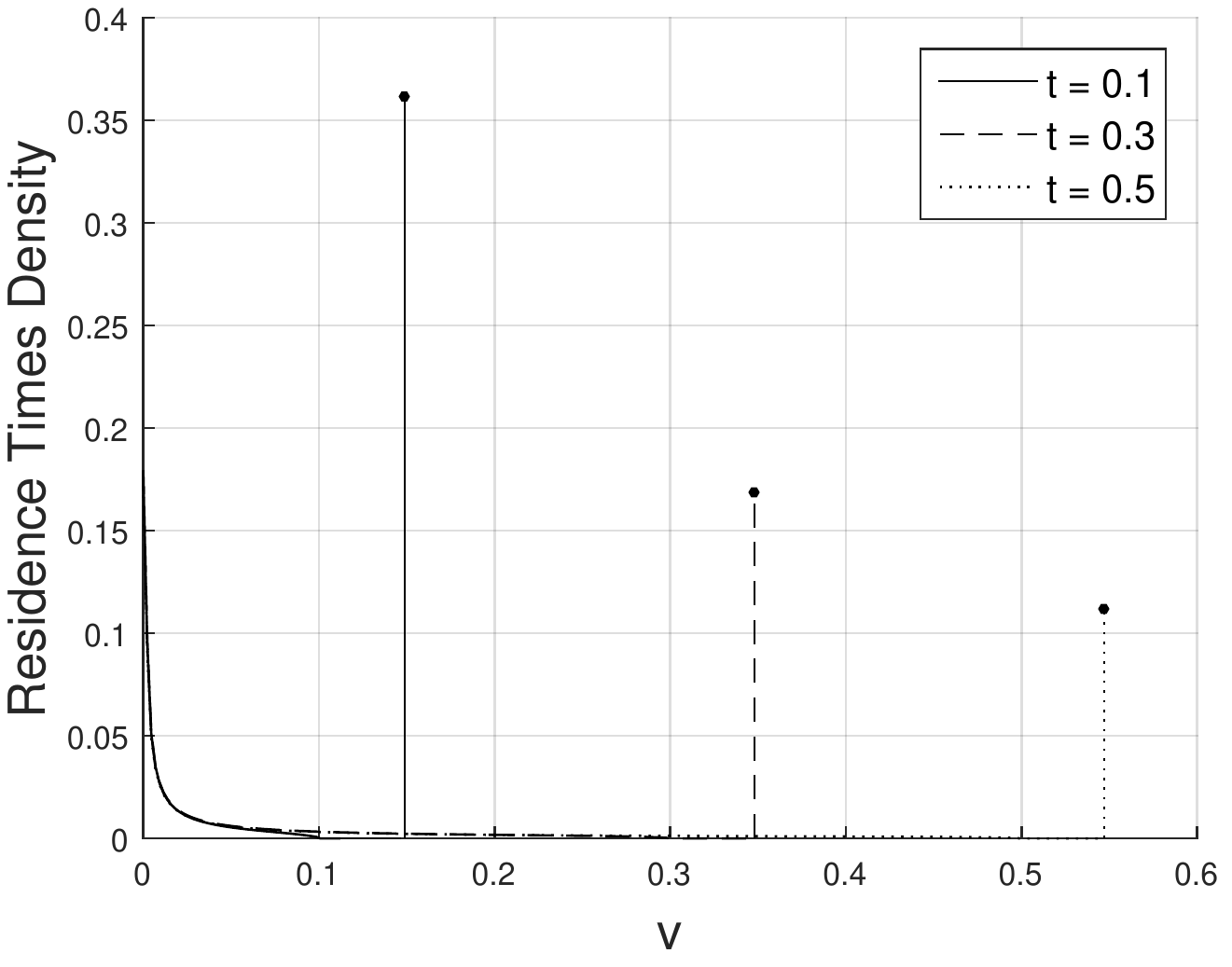}
\caption{\label{fig:res} 
The probability distribution of $V(t)$ given $V(0) = v$ for $v=0$ (left) 
and $v = 0.05$ (right), for the tempered subdiffusive case. 
For $v = 0$, the densities resemble the arcsine distribution. 
For $v > 0$, there is a point mass $\Psi(v+t)/\Psi(t)$ at $v+t$, 
with the remaining probability mass continuously distributed on the
remaining interval $(0,t)$.
Parameters are $\gamma = 1$, $n = 25$, $a = 1$, $c = 0$ and $\beta = 0.9$.}
\end{figure}

\paragraph{Evolution of the residence time distribution.}
Figure \ref{fig:res} describes the evolution of the densities of the
residence time process $V_t$. 
Again we consider the subdiffusive case with $\beta = 0.9$
as in the previous paragraph. 
If $V(0) = v = 0$, we have due to self-similarity
$V(t) / t \stackrel{d}{=} V(1)$, 
and the distribution of $V(1)$ follows the arcsine law
$$\pr[V(1) \in ds] = \frac{\sin \beta \pi}{\pi} s^{\beta-1} (1-s)^{-\beta}\,ds,$$
compare \cite[Prop~3.1]{Bertoin04}.
If  $V(0) = v > 0$, the distribution of $V(t)$ has an atom
at $v+t$, whose weight increases to $1$
as $v \to \infty$, compare the discussion in the previous paragraph.
In the tempered case, for $t \to \infty$ the distribution approaches
\eqref{eq:equi-V}.


\paragraph{Computational accuracy.}

\begin{figure}
\includegraphics[width=\textwidth]{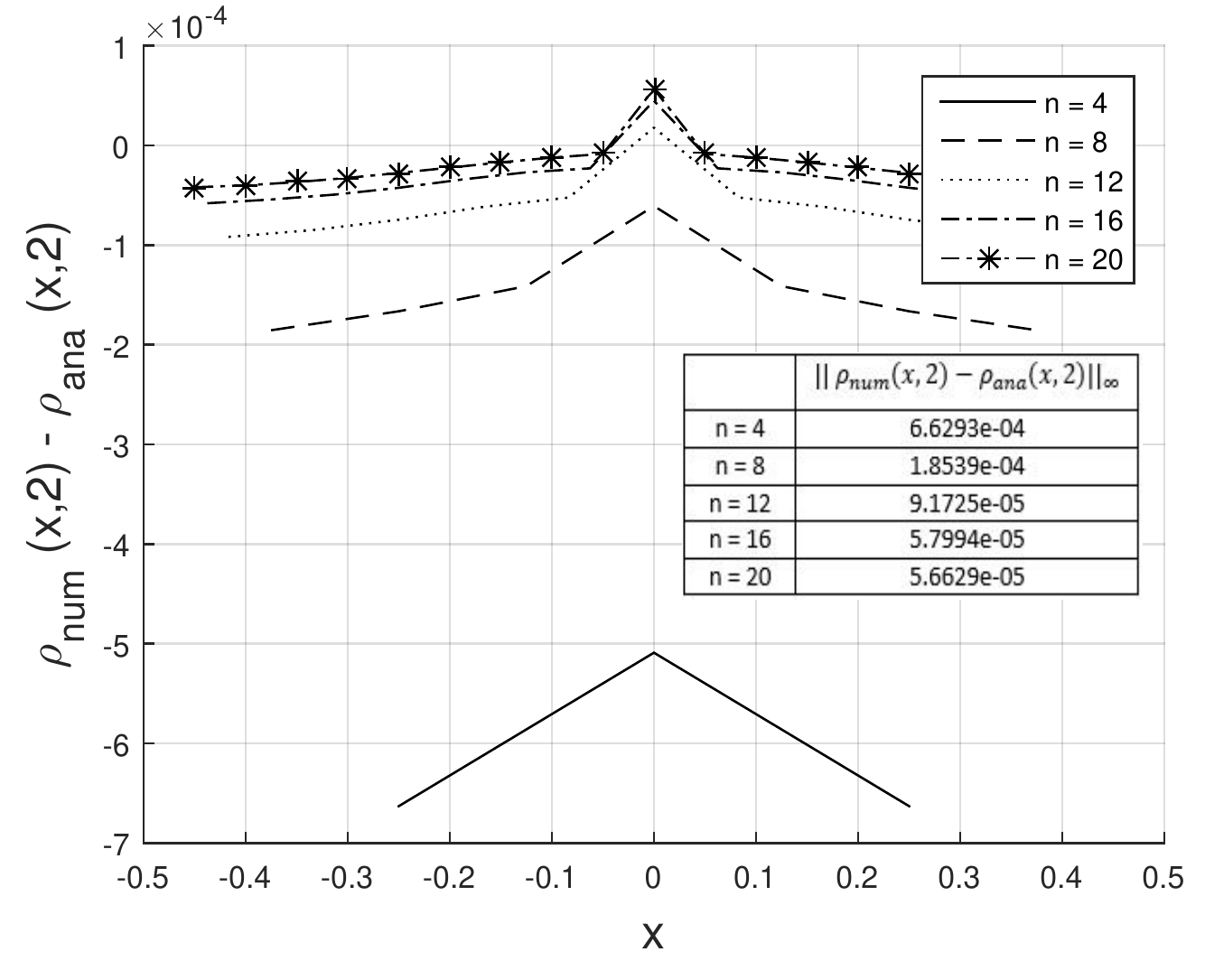}
\caption{\label{fig:errors} 
Computational errors for the standard subdiffusion equation with
$\beta = 0.8$, using our algorithm from Section 5.
As $n$ increases, results become seemingly more accurate.}
\end{figure}

Exact analytical solutions to the symmetric subdiffusion equation
are available, and we check our computed densities against these
solutions. 
Following \cite{Angstmann2015}, use the series representation
$$\rho(x,t) = 1 + \sum_{k=1}^\infty (-1)^k 2 \exp(-(2k\pi)^2t) \cos(2k\pi x)$$
for the solution $\rho(x,t)$ to the ``standard'' fractional diffusion equation
$$\frac{\del \rho(x,t)}{\del t} = a \frac{\del^{1-\beta}}{\del t^{1-\beta}} \frac{\del^2 \rho(x,t)}{\del x^2}.$$
As shown e.g.\ in \cite{HLS10PRL}, the corresponding CTRW limit process
is given by symmetric nearest neighbor jumps, $c = \gamma = 0$ and
$\Psi(w) = w^{-\beta}/\Gamma(1-\beta)$. 
Figure~\ref{fig:errors} displays the computational errors for this case,
which seem to stabilize as the densities of the discrete CTRW approach
the CTRW limit (as $n \to \infty$).

%
%
%

\section{Conclusion}

Similar in spirit to \cite{Angstmann2015}, we have derived an algorithm
for the computation of probability distributions of CTRW limits,
which is based on the stochastic process rather than the 
fractional Fokker-Planck equation. 
Additionally, our approach calculates the residence time, or age
of a walker, which is of independent physical interest, 
and which may be of use for the modelling of non-Markovian diffusion
with distributed age initial condition.

In \cite{Angstmann2016} it is shown that the discrete stochastic
processes approach from \cite{Angstmann2015} is also applicable
to model reaction-diffusion problems and nonlinear interactions. 
Our approach above assumes that particles do not interact, and 
there are severe technical obstacles in extending the above 
mathematical rigour to CTRW limit processes which interact via 
reactions, chemotaxis, or otherwise. 
It is straightforward, however, to write down master equations
with interactions using the Semi-Markov formalism, and thus to
calculate mass distributions, see \cite{StrakaFedotov14}.
The work here may be viewed as an extension to \cite{StrakaFedotov14}
which can model general subordinated particle dynamics. 

In order to focus on the main ideas, we have only considered CTRWs
with nearest neighbor jumps and homogeneous waiting times.
By varying the coefficients $a$, $b$, $c$ and $K$ and by possibly
making them vary in space and time, one can arrive at a variety
of different models; for three such models, see
\cite{BaeumerStraka16}.
It is possible to generalize the Semi-Markov algorithm from 
Section 5 to coupled and non-local jump operators, 
given the formulas derived in \cite{SemiMarkovCTRW}, 
though this may of course require much larger computational effort.

\subsection*{Acknowledgements}
P.~Straka was supported by the UNSW Science Early Career Research Grant and
the Australian Research Council's Discovery Early Career Research Award.

\appendix
\section{Proofs}

\begin{proof}[Proof of Theorem \ref{thm:spctim-conv}]
We  apply Th~IX.4.8 in \cite{JacodShiryaev}.
The process
$\{(A^n_{L(nr)},D^n_{L(nr)})\}_{r \ge 0}$ is a semimartingale in
$\spctim$, in the sense of the cited book. 
Relative to the truncation function
\begin{align*}
h(y,w) &=
 \begin{cases}
  (y,w) &\text{ if } \|y\| < 1 \text{ and } 0 < w < 1 \\
 (0,0) &\text{ else }
 \end{cases}
\end{align*}
its characteristics are
$((\mathbf B^n,\mathbf C^n), \mathbf A^n,  \Pi^n)$, where
 \begin{align*}
\mathbf  B^n_i(t) &= \int_0^t b^n_i(A^n_{L(nr)}, D^n_{L(nr)})dr,
 & b^n_i(x,t) &=n \iint  h_i(y,w)~ K^n(dy,dw|x,t) \\
\mathbf C^n(t) &= \int_0^t \tilde c^n(A^n_{L(nr)}, D^n_{L(nr)}) dr,
& \tilde c^n(x,t) &=n \iint  h_{d+1}(y,w)~ K^n(dy,dw|x,t) \\
\mathbf A^n_{ij}(t) &= \int_0^t \tilde a^n_{ij}(A^n_{L(nr)}, D^n_{L(nr)})dr,
& \tilde a^n_{ij}(x,t) &=n \iint  (h_i h_j)(y,w)~ K^n(dy,dw|x,t)
\end{align*}
\begin{align*}
\Pi^n(dy,dw;dr) = K^n(dy,dw|A^n_{L(nr)}, D^n_{L(nr)})dr
\end{align*}
and where $(h_i h_j) (y,w) = h_i(y,w) h_j(y,w)$.
Observing that
\begin{align}
\lim_{n \to \infty} \tilde c^n(x,t) &= c(x,t) + \iint h_{d+1}(y,w) \Pi(dy,dw|x,t), \\
\lim_{n \to \infty} \tilde a^n_{ij}(x,t)
&= a_{ij}(x,t) + \iint (h_i h_j)(y,w) \Pi(dy,dw|x,t), \quad 1 \le i,j \le d
\end{align}
one verifies that the assumptions of Th~IX.4.8 in \cite{JacodShiryaev} are satisfied.
\end{proof}

\begin{proof}[Proof of Theorem~\ref{th:SM-conv}]
We apply Proposition 2.3 in \cite{StrakaHenry}, which states the
following: The mapping
\begin{align*}
(\alpha, \delta) \mapsto 
\left((\alpha_-, \delta_-) \circ \epsilon_-\right)_+,
\end{align*}
defined for c\`adl\`ag\footnote{French acronym
for right-continuous with left-hand limits} trajectories
$\alpha$ and $\delta$ in $\Rd$ resp.\ $\R$, where $\delta$ is 
increasing and unbounded, and where 
$\epsilon(t) := \inf\{r: \delta(r) > t\}$,
is continuous at all trajectories $(\alpha,\delta)$ where 
$\delta$ is strictly increasing. 
As before, $\circ$ denotes a composition of trajectories, 
and a $+/-$ in the subscript denotes the right-continuous resp.\ 
left-continuous version of a trajectory.
Continuity is with respect to the (metrizable) Skorokhod $J_1$ topology
on the set of all such trajectories \cite{JacodShiryaev}. 

Next, apply the continuous mapping theorem \cite{Billingsley1968}:
Since the processes $(A^n_{L(nr)},D^n_{L(nr)})$ converge to 
$(A(r),D(r))$ as $n \to \infty$, and $D(r)$ is strictly increasing 
(almost surely), the sequence of their images
$(X^n(t),G^n(t))$ must converge to the image $(X(t), G(t))$.
Here, $G^n(t) := (D^n_- \circ E^n_-)_+(t)$, 
$E^n(t) = \inf\{r: D^n_{L(nr)} > t\}$, and 
$G(t) = (D_- \circ E)_+ (t)$ (note that $E(t)$ has a.s.\ increasing
sample paths). 
It is tedious but not too difficult to check that 
$(X^n(t),G^n(t)$ and $(X(t),G(t))$ are really the images of
$(A^n_{L(nr)},D^n_{L(nr)})$ and $(A(r),D(r))$ for the above mapping.

Finally, in a similar fashion mapping the process 
$G^n(t)$ to the process $V^n(t) = t - G^n(t)$ also defines a
continuous mapping, hence $V^n(t)$ also converges to $V(t)$. 
\end{proof}

\begin{proof}[Proof of Lemma \ref{lem:Psi}]
The measure $dH^n$ is concentrated at the steps
$\tau \mathbb N = \{\tau, 2\tau, 3\tau, \ldots\}$
of the function $H^n$.
We use this and Lebesgue-Stieltjes integration by parts
\cite{Hewitt1960} to calculate
\begin{multline*}
n \int_{(0,\eps]} w\,dH^n(w) 
= n \int_{[\tau, \eps]} w\,dH^n(w)
= [nw H^n(w)]^{\eps}_\tau - n\int_{[\tau,\eps]} H^n(w)\,dw
\\
= \eps n H^n(\eps) - \tau n H^n(\tau)
- \int_{[\tau, \eps]} \Psi\left( \lceil w/\tau \rceil
\tau - c/n \right)\,dw.
\end{multline*}
Now examine these terms individually as $n \to \infty$:
\begin{align*}
&\eps nH^n(\eps) \to \eps \Psi(\eps)
\\
&\tau n H^n(\tau) = \tau (\Psi(\tau - c/n))
\sim (\tau_1 + \tau_2) \Psi(\tau_1)
= \tau_1 n + c
\\
&\Psi\left( \lceil w/\tau \rceil
\tau - c/n \right) \to \Psi(w)
\end{align*}
where $\sim$ means the two sequences have the same limit. 
By dominated convergence,
the integral of the third expression converges to 
$\int_{(0,\eps]} \Psi(w)\,dw$.
Since $\Psi$ is integrable at $0$, 
$\Psi(w) \le C w^{-\beta}$ at $w \downarrow 0$ where
$\beta \in (0,1)$. 
Hence $\tau_1 n = \tau_1 \Psi(\tau_1) \to 0$.
Now letting $\eps \downarrow 0$ gives the first statement. 

The second statement follows by integration by parts and dominated
convergence:
\begin{align*}
-n \int_{(0,\infty)} g(w)\,dH^n(w)
= n \int_{(0,\infty)}g'(w) H^n(w)\,dw
\\
\to \int_{(0,\infty)} g'(w) \Psi(w)\,dw
= -\int_{(0,\infty)} g(w) \, d\Psi(w)
\end{align*}
(note that the boundary terms vanish by definition of $g$).
\end{proof}

\end{document}